\documentclass[a4paper,UKenglish,cleveref,autoref,thm-restate]{lipics-v2021}
\pdfoutput=1 %
\hideLIPIcs  %

\usepackage{amsmath}
\usepackage{amssymb}
\usepackage{multirow}
\usepackage{algorithm}
\usepackage[noEnd=true,indLines=true]{algpseudocodex}
\usepackage{booktabs,subcaption,amsfonts}
\usepackage{xspace}
\usetikzlibrary{plotmarks}
\usepackage{pgfplots}
\pgfplotsset{compat=1.18}

\newcommand{\set}[1]{\left\{%
      \begin{array}{l}#1\end{array}
      \right\}}
\newcommand{\sset}[2]{\left\{~#1  \left|
      \begin{array}{l}#2\end{array}
    \right.     \right\}}
\newcommand{\etal}{\textit{et al.\ }}
\newcommand{\false}{\mathit{false}}
\newcommand{\true}{\mathit{true}}
\newcommand\tuple[1]{\langle #1 \rangle}
\newcommand{\pats}{\textsf{pats}}
\newcommand{\pat}{\textsf{pat}}
\newcommand{\dmc}{\texttt{d4}\xspace}
\newcommand{\AllPats}{\textit{AllPats}}
\newcommand{\DomPats}{\textit{DomPats}}

\bibliographystyle{plainurl}%

\title{Breaking Symmetries with Involutions} %

\author{Michael Codish}{Department of Computer Science, Ben-Gurion
  University of the Negev, Israel}{mcodish@bgu.ac.il}{https://orcid.org/0000-0003-0394-5854}{}
\author{Mikol{\'{a}}\v{s} Janota}{Czech Technical University in Prague, CIIRC, Czechia}{mikolas.janota@gmail.com}{https://orcid.org/0000-0003-3487-784X}{}

\authorrunning{M. Codish and M. Janota} %

\Copyright{Michael Codish and Mikol{\'{a}}\v{s} Janota} %

\ccsdesc[300]{Computing methodologies}
\ccsdesc[500]{Theory of computation~Constraint and logic programming} %

\keywords{graph symmetry, patterns, permutation, Ramsey graphs, greedy, CEGAR} %

\category{} %

\relatedversion{} %

\acknowledgements{%
The authors would like to thank the anonymous reviewers for their
valuable comments and constructive feedback, which helped improve the
quality and clarity of this paper.
This work began at the \emph{Dagstuhl-Seminar 23261 SAT Encodings and Beyond}.
The results were supported by the Ministry of Education, Youth and Sports within
the dedicated program ERC~CZ under the project \emph{POSTMAN} no.~LL1902 and
co-funded by the European Union under the project \emph{ROBOPROX}
(reg.~no.~CZ.02.01.01/00/22\_008/0004590).
}

\nolinenumbers %

\EventEditors{Maria Garcia de la Banda}
\EventNoEds{1}
\EventLongTitle{31st International Conference on Principles and Practice of Constraint Programming (CP 2025)}
\EventShortTitle{CP 2025}
\EventAcronym{CP}
\EventYear{2025}
\EventDate{August 10--15, 2025}
\EventLocation{Glasgow, Scotland}
\EventLogo{}
\SeriesVolume{340}
\ArticleNo{25}

\begin{document}
\maketitle

\begin{abstract}
  Symmetry breaking for graphs and other combinatorial objects is
  notoriously hard. On the one hand, complete symmetry breaks are
  exponential in size. On the other hand, current, state-of-the-art,
  partial symmetry breaks are often considered too weak to be of
  practical use.
  Recently, the concept of graph patterns has been introduced and
  provides a concise representation for (large) sets of non-canonical
  graphs, i.e.\ graphs that are not lex-leaders and can be excluded
  from search.  In particular, four (specific) graph patterns apply to
  identify about 3/4 of the set of all non-canonical graphs.
  Taking this approach further, we discover that graph patterns that
  derive from permutations that are involutions play an important
  role in the construction of symmetry breaks for graphs.
  We take advantage of this to guide the construction of partial and
  complete symmetry-breaking constraints based on graph patterns.
  The resulting constraints are small in size and strong in the number
  of symmetries they break.
\end{abstract}

\section{Introduction}

Graph search problems are about finding simple graphs with desired
structural properties. Such problems arise in many real-world
applications and are fundamental in graph theory.
Solving graph search problems is typically hard due to the enormous
search space and the large number of symmetries in graph
representation: Any graph obtained by permuting the vertices of a
solution (or a non-solution) is also a solution (or a non-solution),
and is considered isomorphic, or ``symmetric''.
The set of all such isomorphic graphs forms an isomorphism class.
To optimize search, we aim to restrict it to focus on one
``canonical'' graph from each isomorphism class.

One common approach to eliminate symmetries is to add \emph{symmetry
  breaking constraints} that are satisfied by at least one member of
each isomorphism class~\cite{crawford-kr96,Shlyakhter07,Walsh06}.
A symmetry-breaking constraint is called \emph{complete} if it is
satisfied by exactly one member of each isomorphism class and
\emph{partial} otherwise.
In many cases, symmetry-breaking constraints, complete or partial, are
expressed in terms of ``lex-constraints''. Each lex-constraint
corresponds to one symmetry, $\sigma$, which is a permutation on
vertices, and restricts the search space to consider assignments that
are lexicographically smaller than their permuted form obtained
according to $\sigma$.
If one considers the set of lex-constraints corresponding to all
$n$-factorial permutations, then the corresponding symmetry break is
complete but too large to be of practical use.

Itzhakov and Codish~\cite{Itzhakov2016} observe that a complete
symmetry break can be defined in terms of a number of lex-constraints
which is considerably smaller than $n$-factorial. They succeed in
computing complete symmetry-breaking constraints of practical size for
graphs with 10 or less vertices.  Dan\v{c}o~et~al.\ obtain similar
results in the context of finite models~\cite{danco2025}.  But this
approach does not
scale. Itzhakov~\etal\cite{DBLP:conf/flops/CodishEGIS18} introduce
the notion of ``lex-implications'', which are a refinement of
lex-constraints. In the case of graphs, they compute complete
symmetry breaks which are much smaller in size. But still, this approach
does not scale beyond graphs with 11 vertices.
It is known that breaking symmetry by adding constraints to eliminate
symmetric solutions is intractable in
general~\cite{Babai1983,crawford-kr96}.  So, we do not expect to find
a complete symmetry break of polynomial size that identifies canonical
graphs that are lex-leaders.
Hence, the interest in partial symmetry breaks.

Codish~\etal\cite{Codish2019} introduce a partial symmetry break,
which is equivalent to considering the quadratic number of
permutations that swap a pair of vertices
(transpositions). Rintanen~\etal\cite{RintanenR24} enhance this
approach for directed graphs.  This approach is widely applied and
turns out to work well in practice, despite eliminating only a small
portion of the symmetries.  However, when dealing with hard instances
of graph search problems, this constraint does not suffice.
Over the past decade, there has been little progress in the research
of partial symmetry-breaking constraints for graph search
problems. Some attempts are made
in~\cite{DBLP:conf/cpaior/ItzhakovC23} and
in~\cite{DBLP:conf/cp/CodishGIS16}. However, there are no references in
the literature to the applications that make use of the symmetry
breaks defined in these papers.

In a recent paper, Codish and Janota~\cite{cpaior2025}
introduce a set-covering perspective on symmetry breaking.
A permutation ``covers'' a graph if its application on the graph
yields a smaller graph with respect to a given (lexicographic)
order. A complete symmetry break is then a set of permutations that
covers all non-canonical graphs.
In that paper, the authors introduce the notion of a ``pattern'' which
provides a concise representation for the set of graphs covered by a
permutation.

In this paper, we pick up on the notion of ``patterns'' which we call
here ``graph patterns'' and show how they can be used to provide
complete and also partial symmetry breaks which improve on the current
state-of-the-art.
Motivated by the set-covering perspective presented
in~\cite{cpaior2025}, we implement a greedy algorithm to cover the set
of all non-canonical graphs using graph patterns. Although this
approach does not scale, it provides a test bed to study the structure
of the graph patterns selected in the greedy approach.
We observe that the first four graph patterns selected in the greedy
approach already cover $3/4$ of the set of all graphs (and slightly
more of the set of all non-canonical graphs).
Our study also reveals the importance of a specific type of
permutation when breaking symmetries on graphs. This is the class of
permutations that are equal to their inverse, also called
\emph{involutions}. This class includes the classes of transpositions
(which swap a pair of indices) and of consecutive transpositions
(which swap a pair of consecutive vertices) that are at the basis of
the partial symmetry breaks introduced
in~\cite{DBLP:conf/ijcai/CodishMPS13,Codish2019}.

We implement an algorithm that guides the search for a complete
symmetry break based on the structure of graph patterns. Along the
way, we obtain a chain of partial symmetry breaks of increasing
precision. We present experimental results to evaluate these symmetry
breaks.
These results show that in the setting of a counterexample-guided
abstraction-based (CEGAR) addition of symmetry breaks, focusing on
``involutions first'' reduces the number of iterations of the
CEGAR-loop. Broadly speaking, this shows that it is beneficial to look
for symmetry-breaking constraints systematically, rather than
arbitrarily.

The rest of the paper is structured as follows.
Section~\ref{sec:prelim} provides definitions and notations that
are used throughout. A series of examples are provided to
demonstrate the concepts upon which we build.
Section~\ref{sec:greedy} builds on the set-covering approach to
symmetry breaking and presents a greedy approach to cover all of the
non-canonical graphs with graph patterns. While this approach does not
scale, it provides a test bed with which to study the structure of the
(``best'') graph patterns selected.
Section~\ref{sec:CegarTweak} applies the lessons learned from the
greedy approach to guide a CEGAR-based algorithm to collect specific
types of graph patterns that derive from various types of involutions.
Section~\ref{sec:experiments} describes a series of experiments in
which we provide complete and partial symmetry breaks to a specific
graph search problem and evaluate their performance.
Finally, Section~\ref{sec:conclusions} concludes and presents future work.

\section{Preliminaries and Notation}\label{sec:prelim}

\noindent\textbf{Representing Graphs:~~}
Throughout this paper, we consider simple graphs, i.e.\ undirected
graphs with no self-loops.
The adjacency matrix of a graph $G$ is an $n\times n$ Boolean
matrix. The element at row $i$ and column $j$ is $\true$ if and only
if $(i,j)$ is an edge. The list of edges of a graph $G$ is denoted
$edges(G)$. It consists of the $\binom{n}{2}$ elements obtained as
the concatenation of the columns of the upper triangle of $G$ (or any
other predetermined order).
In abuse of notation, we let $G$ denote a graph in
any of its representations.
An \emph{unknown graph} of order-$n$ is represented as an $n\times n$
adjacency matrix of Boolean variables which is symmetric and has the
values $\false$ (denoted by 0) on the diagonal, or as the
corresponding list of edges.

\medskip\noindent\textbf{Ordering Graphs:~~}
Let $G_1,G_2$ be known or unknown graphs with $n$ vertices. Then,
$G_1 \leq G_2$ if and only if $edges(G_1)\leq_{lex} edges(G_2)$ where
$\leq_{lex}$ denotes the standard lexicographic ordering.
When $G_1$ and $G_2$ are unknown graphs, then the lexicographic
ordering, $G_1\leq G_2$, can be viewed as specifying a
\emph{lexicographic order constraint} over the variables in $G_1$
and $G_2$.
For Boolean strings $\bar a=\tuple{a_1,\ldots, a_m}$ and
$\bar b=\tuple{b_1, \ldots, b_m}$ and $1\leq i\leq m$, we denote
  \begin{equation}
    \label{eq:smaller-i}
    \bar a<_{lex}^i \bar b \Leftrightarrow
    \set{(a_1=b_1),\ldots, (a_{i-1}=b_{i-1}),
      (a_i=0), (b_i=1)}
  \end{equation}
We denote $G_1<^i G_2$ if $edges(G_1)<_{lex}^i edges(G_2)$.

\medskip\noindent\textbf{Permutations:~~}
The group of permutations on $\{1 \ldots n\}$ is denoted $S_n$.
We represent a permutation $\pi \in S_n$ as a sequence of length $n$
where the i$^{th}$ element indicates the value of $\pi(i)$.  For
example: the permutation $[2,3,1] \in S_3$ maps as follows:
$\set{1 \mapsto 2, 2 \mapsto 3, 3 \mapsto 1}$.
We will refer to the following types of permutations:

\medskip\noindent\textbf{transpositions:~~}
A \emph{transposition} is a permutation that swaps two elements and
leaves all other in place. For example, $[4,2,3,1]$ is the
transposition that swaps elements $1$ and $4$. A transposition that
swaps a pair of consecutive elements, is called a \emph{consecutive
  transposition}. For example, $[1,3,2,4]$ is a consecutive
transposition.

\medskip\noindent\textbf{involutions:~~}
An \emph{involution} is a permutation $\pi$ such that $\pi\circ\pi$ is
the identity. Involutions are permutations that swap any set of pairs
of disjoint elements. For example, $[4,3,2,1]$ is the involution which
swaps the pairs $\set{1,4}$ and $\set{2,3}$.  Transpositions are a
special case of involutions.  A \emph{consecutive involution} is an
involution that swaps any number of consecutive pairs. For example
$[2,1,4,3]$ is a consecutive involution. We say that an involution is
\emph{intersecting} if it swaps (among others) the pairs $\set{i,j}$
and $\set{k,l}$ with $i<j$ and $k<l$ such that the intervals $[i,j]$
and $[k,l]$ are intersecting. If an involution is not intersecting,
then we say that it is \emph{disjoint}. For example, $[3,4,1,2]$ (in
cyclic notation $(1~ 3)(2~ 4)$) is a disjoint involution because the
intervals $[1,3]$ and $[2,4]$ are intersecting.
The number of involutions is given as the sequence \texttt{A000085} of
  the Online Encyclopedia of Integer Sequences.
Note that an involution is always a composition of finitely many
disjoint transpositions.

\medskip

Permutations act on graphs and on unknown graphs in the natural
way. For a graph and also for an unknown graph $G$,
viewing $G$ as an adjacency matrix, given a permutation $\pi\in S_n$,
then $\pi(G)$ is the adjacency matrix obtained by mapping each element
at position $(i,j)$ to position $(\pi(i),\pi(j))$ (for $1\leq i,j\leq
n$). The composition of permutations is defined in the natural way.
Two graphs $G,H$ are \emph{isomorphic} if there exists a
permutation $\pi \in S_n$ such that $G=\pi(H)$.
Note that applying an involution to the vertices of a graph (directed or
undirected) induces an involution on its edges because
$(\pi\circ\pi(i),\pi\circ\pi(j))=(i,j)$ for any involution $\pi$.

\medskip\noindent\textbf{Symmetry Breaks:~~}
A \emph{symmetry break} for graph search problems is a predicate, $\psi(G)$,
on a graph $G$, which is satisfied by at least one graph in each
isomorphism class of graphs.
If $\psi$ is satisfied by exactly one graph in each isomorphism class,
then we say that $\psi$ is a \emph{complete symmetry break}. Otherwise, it is
\emph{partial}.

Heule~\cite{Heule2019} defines the notion of \emph{redundancy ratio}, which we denote
$\rho(\psi)$, to measure the precision of $\psi$.  This is the ratio
between the number of graphs that satisfy $\psi(G)$ and the number of
isomorphism classes.  One can view $\rho(\psi)$ as the average number
of graphs per isomorphism class that are not eliminated by $\psi$.

In our setting, the canonical graphs, are the minimal (lexleader)
graphs of the isomorphism classes of graphs. The following example is
adapted from~\cite{cpaior2025} and demonstrates some of the concepts
introduced so far. Notice that, in this paper, edge variables are
ordered by columns.

\begin{example}\label{example:lex-constraints}
  The following depicts an unknown, order-4, graph $G$, its
  permutation $\pi(G)$, for $\pi = [1,2,4,3]$, and their
  representations as lists of edges.  The lex-constraint
  $G \leq \pi(G)$ can be simplified as described by
  Frisch~\etal\cite{Frisch03} to:
  $\tuple{x_2,x_3} \leq_{lex} \tuple{x_4,x_5}$.

  \noindent
  \resizebox{\textwidth}{!}{%
  {\small\begin{tabular}{p{.253\textwidth}p{.285\textwidth}l}
 {$\mathbf{G=}\left[\begin{matrix}
    0   & x_1 & x_2 & x_4 \\
    x_1 & 0   & x_3 & x_5 \\
    x_2 & x_3 & 0   & x_6 \\
    x_4 & x_5 & x_6 & 0 
\end{matrix}\right]$}
&
  {$\mathbf{\pi(G)=}\left[\begin{matrix}
    0 & x_1 & x_4 & x_2 \\
    x_1 & 0 & x_5 & x_3 \\
    x_4 & x_5 & 0 & x_6 \\
    x_2 & x_3 & x_6 & 0 
\end{matrix}\right]$}
&
 $\begin{array}{l}
   \text{edges}(G)=\tuple{x_1,x_2,x_3,x_4,x_5,x_6}\\  
   \text{edges}(\pi(G))=\tuple{x_1,x_4,x_5,x_2,x_3,x_6}
\end{array}$
\end{tabular}
}}

  There are 64 graphs of order 4. Eleven of these are canonical and
  the other 53 are non-canonical. The canonical graphs, represented as
  lists of edges, are detailed below.

  {\small
\begin{verbatim}
[0,0,0,0,0,0] [0,0,0,0,0,1] [0,0,0,0,1,1] [0,0,0,1,1,1] [0,0,1,0,1,1] [0,0,1,1,0,0] 
[0,0,1,1,0,1] [0,0,1,1,1,1] [0,1,1,1,1,0] [0,1,1,1,1,1] [1,1,1,1,1,1] 
\end{verbatim}
}
\end{example}

\medskip\noindent\textbf{Graph Patterns:~~}
A \emph{graph pattern} is a partially instantiated graph $G$ (some elements
are variables) such that all instances of $G$ are non-canonical. We
typically represent  a graph pattern $G$ using the list notation $\text{edges}(G)$.

\begin{example}\label{graph patterns order 4}
  The following six graph patterns describe all of the 53 non-canonical graphs
  of order~4.  
  {\small
\begin{verbatim} 
      [1,0,A,B,C,D] [A,1,0,B,C,D] [A,1,B,0,C,D] 
      [A,B,1,B,0,C] [A,A,B,C,1,0] [A,B,B,1,0,C]
\end{verbatim}
  }
\noindent
So, an order 4 graph is canonical if and only if it
is not an instance of any  of these six graph patterns. One can check
that this is the case  for the 11 canonical graphs detailed in
Example~\ref{example:lex-constraints}. 
\end{example}

For demonstration, it is easy to see why the first graph pattern of
Example~\ref{graph patterns order 4} is a valid pattern. If the vector
representation of the graph $G$ starts with 1,0, which correspond to
the edges $\{1,2\}, \{1,3\}$, these can be swapped by the
transposition $[1,3,2,4]$ (swap vertices 2 and 3) resulting in a smaller
graph (this will possibly affect other edges, but these are no longer
important for the lexicographic comparison). So, $G$ is not canonical.

Formally, graph patterns derive from permutations as stated in the
following definition which is adapted from~\cite{cpaior2025} where
edge variables are ordered by rows (in this paper we order the edges
by columns).

\begin{definition}[\cite{cpaior2025}]\label{def:patterns}
  Let $\pi$ be a permutation, $G$ be an unknown graph of order-$n$
  with $edges(G)=\tuple{x_1,\ldots,x_m}$,
  $edges(\pi(G))= \tuple{y_1,\ldots,y_m}$, and let $1\leq i\leq m$.
  The graph pattern, $\pat_i(\pi)$, is the result of applying the most
  general unifier to the equations from the right side in
  Equation~\eqref{eq:smaller-i} for the case of
  $edges(\pi(G))<_{lex}^i edges(G)$ (i.e., the most general
  substitution that makes both sides of the equations identical).
  If the equations have no solution, then we denote
  $\pats_i(\pi)=\bot$.
  The set of all graph patterns of order $n$ is denoted:
  $\AllPats(n) = \sset{\pat_i(\pi)\neq\bot}{\pi\in S_n, 1\leq i\leq
\binom{n}{2}}$.
\end{definition}

The graph pattern $\pat_i(\pi)$ represents the set of graphs that get
smaller at position $i$ with $\pi$. If $\pi$ is a particular type of
permutation, e.g., a transposition, then we say that  $\pat_i(\pi)$ is
a pattern of that type.
Note that $\pat_i(\pi)$ is a legal
graph pattern because graphs that get smaller under some permutation
are trivially non-canonical.
Observe that $\pat_3([1,3,2,4])=\bot$, as in this case
$edges(G)=[x_1,x_2,x_3,x_4,x_5,x_6]$, $edges(\pi(G))=[x_2,x_1,x_3,x_4,x_6,x_5]$
and $edges(\pi(G))<_{lex}^3 edges(G)$ has no solution (because $x_3\not < x_3$).
Observe also, that in some cases, different permutations lead to
identical patterns. For example,
$\pat_3([2,5,1,3,4]) = \pat_3([2,5,1,4,3]) =
[x_1,x_1,1,x_2,x_3,x_4,0,x_1,x_5,x_6]$.

\begin{example}\label{smart_cover-1}
The following details how the graph patterns from Example~\ref{graph
  patterns order 4} are derived from permutations.
  \[
\begin{array}{ll}
    \pat_1([1,3,2,4])=[1,0,A,B,C,D] &\hspace{2cm}
    \pat_2([2,1,3,4])=[A,1,0,B,C,D]\\
    \pat_2([1,2,4,3])=[A,1,B,0,C,D]&\hspace{2cm}
    \pat_3([1,2,4,3])=[A,B,1,B,0,C]\\
    \pat_4([2,1,3,4])=[A,B,B,1,0,C]&\hspace{2cm}
    \pat_5([1,3,2,4])=[A,A,B,C,1,0]\\
  \end{array}
\]

\end{example}

\begin{definition}[cover]\label{def:cover}
  Let $\pi\in S_n$, $1\leq i\leq \binom{n}{2}$, and let
  $\pat_i(\pi)\neq\bot$. 
  Then, $\pat_i(\pi)$ covers a graph $G$ if $G$ is an
  instance of $\pat_i(\pi)$.
  Equivalently, $\pat_i(\pi)$ covers $G$ if 
  $\pi(G)<^i G$.
  We denote the set $cover(\pat_i(\pi)) = \sset{G}{\pi(G)<^i G}$.
  Sometimes we write also $cover(\pi,i)$. If $\pat_i(\pi)=\bot$ then
  it has no instances and so in this case, we define
  $cover(\pat_i(\pi)) = \emptyset$. The number of graphs covered by a
  graph pattern $p$ is $2^k$ where $k$ is the number of variables in $p$.
\end{definition}

Some graph patterns ``dominate''  others. 
\begin{definition}[dominate]
  We say that graph pattern $p_1$ dominates graph pattern $p_2$ if
  $cover(p_2)\subseteq cover(p_1)$. Given a set $S$ of graph patterns,
  $dominators(S)$ denotes the set of dominating atoms in $S$. We
  denote $\DomPats(n)=\text{dominators}(\AllPats(n))$.
\end{definition}

Viewing graph patterns as first-order logic terms, $p_1$ dominates $p_2$ if $p_1$ is more
general than $p_2$. In Prolog, one can implement a test for ``$p_1$
dominates $p_2$'' using the built-in operator
$\mathtt{subsumes\_term(p_1,p_2)}$.

\begin{example}
  There are 59 elements in $\AllPats(4)$ (after removing
  redundancies). The set $\DomPats(4)$ contains only 18 elements.
  For instance,
  
  \begin{tabular}{l}
    $\pat_3([1,2,4,3])=[A,B,1,B,0,C]$
    ~~dominates ~~
    $\pat_3([1,4,2,3])=[A,A,1,A,0,B]$, and\\
    $\pat_2([2,1,4,3])=[A,1,B,C,0,D]$
    ~~dominates ~~
    $\pat_3([3,4,2,1])=[A,1,1,B,0,A]$.
  \end{tabular}
  
  \medskip\noindent
   
\end{example}

Table~\ref{table:dom} details the total number of graph patterns and
the number of dominating graph patterns (in parentheses) for various
types of permutations with small values of $n$. For consecutive
transpositions, and for transpositions, all of the patterns are
dominating. For other types of involutions, a large majority of the
patterns are dominating.

\begin{table}[t]\begin{center}
  \caption{Numbers of graph patterns (and dominating graph patterns) for
  various types of permutations: consecutive transpositions
  \texttt{(ct)}, transpositions \texttt{(t)}, consecutive involutions
  with transpositions \texttt{(ci+t)}, disjoint involutions \texttt{(di)},
  involutions \texttt{(i)}, and all permutations \texttt{(all)}
}
  \begin{tabular}{|c||c|c|c|c|c|c|}
    \hline
    order & \multicolumn{1}{c|}{ct} & \multicolumn{1}{c|}{t}
    & \multicolumn{1}{c|}{ci+t} & \multicolumn{1}{c|}{di}
    & \multicolumn{1}{c|}{i} & \multicolumn{1}{c|}{all} \\
    \hline
    4 & 6 ~(6)  & 12 ~(12)  & 14 ~(14)  & 14 ~(14)  & 17  ~(16)   & 59~(18) \\
    5 & 12~(12) & 30 ~(30)  & 40 ~(39)  & 47 ~(46)  & 80  ~(75)   & 550~(163)\\
    6 & 20~(20) & 60 ~(60)  & 92 ~(88)  & 136~(130) & 348 ~(327)  & 4610~(1648)\\
    7 & 30~(30) & 105~(105) & 187~(176) & 361~(339) & 1451~(1369) & 43065~(17945)\\
    8 & 42~(42) & 168~(168) & 354~(329) & 906~(842) & 6055~(5762) & 421435~(199509)\\
    \hline
  \end{tabular}
  \label{table:dom}
\end{center}\end{table}

Some graph patterns are orthogonal to others.

\begin{definition}[orthogonal]\label{def:orthogonal}
  Let $p_1$ and $p_2$ be (non-$\bot$) graphs patterns. 
  We say that  $p_1$ is orthogonal to  $p_2$
  if $cover(p_1)\cap cover(p_2)=\emptyset$.
\end{definition}

In Prolog, one can implement a test for ``$p_1$
is orthogonal to $p_2$'' using the built-in operator for ``not
unifiable''.

\medskip\noindent\textbf{CEGAR:~~}
In~\cite{Itzhakov2016} and in~\cite{DBLP:conf/flops/CodishEGIS18} the
authors compute complete symmetry breaks for graphs based respectively
on permutations and on a refinement of permutations which they term
``implications''. A similar approach is applied in~\cite{danco2025} to
break the symmetries for finite models.
Common to all of these works is an algorithm based on
\emph{counter-example guided abstraction refinement (CEGAR)}~\cite{cegar2000}.

In a nutshell, and in its simplest form, the CEGAR-based algorithm
performs as follows: Let $\Psi$ denote a set of
permutations, which is initially empty. The algorithm repeatedly seeks
a counter-example to the statement: ``\emph{$\Psi$ is a complete
  symmetry break}''. A counter-example takes the form: graph $G$ and
permutation $\pi$ such that
\[ \pi(G)<G \wedge \bigwedge_{\pi'\in\Psi}G\leq \pi'(G).\] If such a
counter-example is found then $\Psi=\Psi\cup\{\pi\}$. If no such
counter-example is found, then $\Psi$ is a complete symmetry break.
The search for a counter-example is implemented using a SAT encoding
and incremental SAT solving.\footnote{
  The SAT solver cadical-2.1.0~\cite{cadical} was used in all our experiments.
}

An important detail is that $\Psi$ may contain redundant elements. For
example, if a permutation added at some point becomes redundant in
view of permutations added later. A second phase of the algorithm
iterates on the elements of $\Psi$ to remove redundant permutations
(similar to the iterative algorithm for a minimally unsatisfiable set or
monotone predicates in general~\cite{humus,monotone}). It is important
to note that the time to perform the second phase is costly. For
example, in~\cite{Itzhakov2016}, the authors report that the time to
compute the complete symmetry break $\Psi$ for order $10$ graphs is
close to $10$ hours and the time to reduce it is $84$ hours.

In this paper, we focus on symmetry breaks defined in terms of graph
patterns. We seek a set of graph patterns that covers all of the
non-canonical graphs.  The set $\AllPats(n)$ of all graph patterns
clearly covers all non-canonical graphs and hence it is a complete
symmetry break. But this set is too large to be of practical use.
We adapt the CEGAR approach to graph patterns. In this setting, $\Psi$
is a set of graph patterns, and we repeatedly seek a counter-example:
a graph $G$ and a graph pattern $\pat_i(\pi)$ such that $G$ is covered by
$\pat_i(\pi)$ but is not covered by any of the graph patterns in $\Psi$. If
such a counter-example is found then
$\Psi=\Psi\cup\{\pat_i(\pi)\}$. If no such counter-example is found,
then $\Psi$ is a complete symmetry break.

Let $p=[p_1,\ldots,p_m]$ be a graph pattern and let
$G=[x_1,\ldots,x_m]$ denote an unknown graph. Let $I_1$ denote the set
of induces in $p$ which contain a zero, $I_2$ denote the set
of indices in $p$ which contain a one, and $I_3$ denote the set of
pairs of indices that contain equal variables.
The single clause:
\[ c =
  (\bigvee_{i\in I_1} x_i ) \vee
  (\bigvee_{i\in I_2} \neg x_i ) \vee
  (\bigvee_{(i,j)\in I_3} x_i\mathbin{\oplus} x_j)
\]
encodes that $G$ is not covered by $p$. Strictly speaking, $c$ is not
a clause due to the \texttt{xor} operations. It is straightforward to
replace $x_i\mathbin{\oplus} x_j$ by a fresh variable $x_{ij}$ and to
add clauses for $x_{ij}\leftrightarrow x_i\mathbin{\oplus} x_j$
(note that these fresh variables are reused across
the encoding of different graph patterns).

\begin{example}
  let $p=[A,B,1,B,0,C]$. The clause
  $(x_5 \vee \neg x_3 \vee x_2\mathbin{\oplus}x_4)$ encodes that 
  $G=[x_1,x_2,x_3,x_4,x_5,x_6]$ is not covered by $p$.
\end{example}

In~\cite{DBLP:conf/flops/CodishEGIS18}, the authors present a CEGAR
based approach to derive complete symmetry breaks based on
``lex-implications''. Essentially, lex-implications are constraints of
the form detailed in Equation~\eqref{eq:smaller-i}. Similar to graph
patterns, lex-implications can be represented as pairs of the form
$(\pi,i)$.
The novelty in the presentation in this paper stems from the graph
pattern perspective where it is natural to talk about concepts such as
cover, dominance and orthogonality. In the graph pattern perspective,
any set $S$ of graph patterns is a symmetry break in the sense that a
graph $G$ is canonical, \emph{only if} it is not an instance of
an element in~$S$.
Moreover, $S$ is a complete symmetry break if also the inverse
\emph{(if)} statement holds.

\section{Breaking Symmetries with Graph Patterns: A Greedy Approach}\label{sec:greedy}

The CEGAR approach outlined above breaks symmetries as they are
suggested by the SAT solver as counter-examples.
This might lead to unnecessarily large sets of graph patterns
as the SAT solver makes arbitrary choices relatively to the ultimate objective.
In this section, we apply a greedy approach to derive symmetry breaks
consisting of graph patterns. At the core of the approach is the
notion of ranking a  graph pattern in view of the set $\Psi$ of
graph patterns selected so far.

\begin{definition}\label{def:ranking}
  Let $\Psi$ be a set of graph patterns (a symmetry break) and let $p$
  be a graph pattern. We define $\textrm{ranking}_\Psi(p)$ to be the number of
  graphs covered by $p$ but not covered by any of the elements of
  $\Psi$
\end{definition}

The basic idea is presented as Algorithm~\ref{alg:symBreak}. At
Line~\ref{LineA}, The set $S$ of candidate graph patterns is
initialized to the set of dominating graph patterns. The symmetry break
$\Psi$ is initialized to the empty set. At each step of the algorithm,
a graph pattern $p$ with maximal ranking is selected and the sets $S$ and
$\Psi$ are updated.

\begin{algorithm}[t]
\begin{algorithmic}[1]
  \Procedure{symbreak}{$n$}
  \State $S\gets \DomPats(n)$; $\Psi\gets\emptyset$;\label{LineA}
  \Repeat
  \State select $p\in S$ such that\label{LineB}
  \State \qquad $r= \max_{p'\in S}{\textrm{ranking}_\Psi(p')}$;\label{LineC}
  \State \qquad $\textrm{ranking}_\Psi(p) = r$;\label{LineD}
  \If{r>0}
  \State $S\gets S\setminus\{p\}$; $\Psi\gets\Psi\cup\{p\}$;
  \EndIf
  \Until{$r = 0$};
  \State\Return{$\Psi$}; \Comment{$\Psi$ is a complete symmetry break}
  \EndProcedure
\end{algorithmic}
\caption{Greedily compute symmetry break for order $n$ graphs}%
\label{alg:symBreak}
\end{algorithm}

In the implementation, ranking the current set $S$ of candidate graph
patterns (Line~\ref{LineC} in Algorithm~\ref{alg:symBreak}) is a
bottleneck. Each graph pattern is ranked using a sat encoding with a model
counter. As an optimization, in the loop at
Lines~\ref{LineB}--\ref{LineD}, we remove from $S$ all graph patterns
$p$ such that $\textrm{ranking}_\Psi(p) = 0$. Moreover, whenever we select a
pattern $p\in S$ at Line~\ref{LineB}, we also select a set
$S'\subseteq S$ (as large as possible) so that $\{p\}\cup S'$ are
mutually orthogonal (as prescribed by Definition~\ref{def:orthogonal}).
The selection of $S'$ is greedy: When selecting $p\in S$ at Line~\ref{LineB},
initialize $S'=\{p\}$ and iterate over the patterns from S, adding a
pattern to $S'$ if it is orthogonal to those already in $S'$.

\begin{example}
  The following details the run of Algorithm~\ref{alg:symBreak} for
  $n=4$. The algorithm selects 6 graph patterns in 3 rounds. The
  column labeled $\Delta$ indicates the number of graphs covered by
  this graph pattern and not covered by those above.  In this example,
  the second graph pattern added in each round is orthogonal to that
  from the first added in the round. The selected graph patterns are
  the same as those detailed in Example~\ref{graph patterns order
    4}. Note that the sum of the numbers in the $\Delta$ column is 53,
  corresponding to the number of non-canonical graphs of order 4.

  \medskip
\begin{tabular}{|c|c|c|}
\hline
round & pattern & $\Delta$\\
\hline
1 & [1,0,A,B,C,D] & $16$\\
1 & [A,1,0,B,C,D] & $16$\\
2 & [A,1,B,0,C,D] & $8$\\
2 & [A,B,1,B,0,C] & $6$\\
3 & [A,A,B,C,1,0] & $5$\\
3 & [A,B,B,1,0,C] & $2$\\
\hline
\end{tabular}
\end{example}

The following proposition identifies four specific graph patterns
that cover $3/4$ of the total number of graphs. Observe that the
permutations detailed in the proposition are consecutive
transpositions. 

\begin{proposition}\label{strong-four}
For $n\geq 5$, the four top ranking graph patterns always take the 
following particular form where $m=\binom{n}{2}-2$. Below, we adopt the
cycle notation for permutations. The permutation $(j,k)$ is the
permutation which swaps elements $j$ and $k$ and leaves all other
elements fixed.

\begin{enumerate}
\item  $\pat_1((2,3))=[1,0,x_1,x_2,x_3,x_4,x_5,x_6,\ldots,x_m]$
\item  $\pat_2((1,2))=[x_1,1,0,x_2,x_3,x_4,x_5,x_6,\ldots,x_m]$
\item  $\pat_2((3,4))=[x_1,1,x_2,0,x_3,x_4,x_5,x_6,\ldots,x_m]$
\item  $\pat_4((4,5))=[x_1,x_2,x_3,1,x_4,x_5,0,x_6,\ldots,x_m]$
\end{enumerate}

\noindent
The first two graph patterns are orthogonal and each covers $2^{m-2}$
graphs.  The second two are also orthogonal and each covers $2^{m-3}$
graphs from those not covered by the first two. Together they cover
$3/4$ of the total number of graphs.
\end{proposition}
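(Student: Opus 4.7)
The plan is to split the proof into three parts: computing the four patterns explicitly, counting their (incremental) coverage, and justifying the ``top ranking'' claim.

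For the first part, I would apply Definition~\ref{def:patterns} directly. Since edges are listed by columns as $(1,2), (1,3), (2,3), (1,4), (2,4), (3,4), (1,5),\ldots$, the action of any consecutive transposition $(j,j+1)$ on the edge list is transparent: it fixes the edge $(j,j+1)$ itself and swaps in pairs the edges incident to exactly one of $\{j,j+1\}$. Performing the unifications $y_k = x_k$ for $k < i$ together with the substitution $x_i=1$, $y_i=0$ yields the claimed pattern in each of the four cases. The swaps induced at positions beyond $i$ do not appear in the pattern because no unification is applied there.

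For the second part, each of the four patterns has exactly two fixed positions and $m=\binom{n}{2}-2$ distinct variables, so by Definition~\ref{def:cover} each covers $2^m$ graphs. Patterns 1 and 2 disagree at position~2 and patterns 3 and 4 disagree at position~4, which gives the two orthogonality statements. For the incremental coverage in round~2, pattern 3 is automatically orthogonal to pattern 1 through position~2 and overlaps pattern 2 in exactly half of its coverage (its graphs with position~3 set to $0$), contributing $2^{m-1}$ new graphs. For pattern 4 the overlaps with patterns 1 and 2 are themselves disjoint (they contradict at position~2) and each is a quarter of its coverage, leaving $2^{m-1}$ new graphs again. Summing, $2\cdot 2^m + 2\cdot 2^{m-1} = 3\cdot 2^m$ out of the $2^{m+2}$ graphs of order $n$, i.e.\ $3/4$.

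The main obstacle is the ``top ranking'' claim itself. The key structural invariant is that every non-$\bot$ graph pattern $\pat_i(\pi)$ covers at most $2^m$ graphs, with equality exactly when it contains one $1$ at position $i$, one $0$ at some position $j\ne i$, and no variable sharing induced by the unification. Patterns 1 and 2 attain this bound and are mutually orthogonal. I would then show that the greedy's orthogonal extension in round~1 cannot add any further $2^m$-covering pattern: a candidate $\pat_{i'}(\sigma)$ orthogonal to both patterns 1 and 2 must satisfy simultaneously $(j'=1$ or $i'=2)$ and $(j'=2$ or $i'=3)$; three of the four combinations collapse at once, and the remaining case $i'=3$, $j'=1$ reduces to the system $\sigma(\{1,2\})=\{1,2\}$, $\sigma(\{1,3\})=\{1,3\}$, $\sigma(\{2,3\})=\{1,2\}$, which forces $\sigma(1)=1$, $\sigma(2)=2$, $\sigma(3)=3$ and then contradicts the third equation. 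Consequently the maximum round-2 incremental coverage is $2^{m-1}$, attained by patterns 3 and 4 (mutually orthogonal and realised by consecutive transpositions for every $n\ge 5$). Verifying carefully that no $2^{m-1}$-covering candidate fully disjoint from patterns 1 and 2 beats these specific forms is the most delicate step of the argument.
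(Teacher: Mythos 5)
Your approach is genuinely different from the paper's. The paper's proof is a two-line sketch that invokes a hereditary result of Kvasnicka (the subgraph of a lex-leader on its first $k$ vertices is again a lex-leader, in the column-wise edge order) to conclude that a pattern $\pat_i((j,k))$ on order-$n$ graphs is also a pattern on order-$(n+1)$ graphs. The intended argument is then a ``stability'' one: the four listed patterns, being built from consecutive transpositions on vertices $\le 5$ and positions $i\le 4$, persist unchanged (up to a longer tail of free variables) as $n$ grows, so the identity of the top-ranked patterns can be checked once for a small base case and carried upward. You instead attempt a direct, order-independent combinatorial argument: bound the maximal coverage of any single pattern, show patterns 1 and 2 attain it and exhaust the round-1 orthogonal choices, then repeat for round 2. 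Both proofs are sketches; the paper's is shorter but does not spell out why \emph{ranking}, rather than just pattern validity, is preserved as $n$ increases, while yours attacks ranking head-on but requires a case analysis that you do not finish.

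There are two concrete gaps in your proposal. First, the word ``consequently'' is doing illegitimate work: from ``no $2^m$-covering pattern is orthogonal to both $P_1$ and $P_2$'' it does not follow that every candidate contributes at most $2^{m-1}$ new graphs in round~2. A $2^m$-covering pattern could meet only one of $P_1,P_2$, and the size of that single intersection is what controls the residual contribution; you would need a separate argument (e.g.\ observing that a $2^m$-covering pattern can only sit at a column-start position $i\in\{1,2,4,7,\dots\}$, and that its fixed $0$ lies strictly to the right of $i$, which forces the intersection with $P_1\cup P_2$ to be at least half of $p$). Second, you explicitly flag but do not close the verification that no other $2^{m-1}$-contributing candidate ties or beats patterns~3 and 4, which is the crux of the ``top ranking'' claim for round~2. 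One smaller point worth raising: your coverage counts ($2^m$ per pattern, $2^{m-1}$ incremental) are the ones that actually make the $3/4$ arithmetic work out of $2^{m+2}$ graphs; the exponents $2^{m-2}$ and $2^{m-3}$ in the statement of the proposition appear to be off by a constant and are inconsistent with its own final conclusion.
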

\begin{proof}(sketch)
  The proof follows from a theorem presented in~\cite{Kvasnicka:1990:CIC:83354.83356},
  which states that if $G$ is
  an order $n$ lex-leader canonical graph where the order of edges
  is column-wise, then so is the subgraph $G(k)$ on the first $k$
  vertices for $1\leq k< n$.
  It follows that for any position $i< \binom{n}{2}$, the graph pattern
  $\pat_i((j,k))$ on order-$n$ graphs is also a graph pattern for
  order-$n+1$ graphs.
\end{proof}

Table~\ref{table:profiling} compares 6 symmetry breaks and profiles
them with respect to 5 types of permutations.
The first three symmetry breaks are obtained using
Algorithm~\ref{alg:symBreak}: \texttt{greedy complete} is the complete
symmetry break, \texttt{greedy partial} is the partial
symmetry break consisting of the first (highest ranking) 50\% of the
patterns obtained from Algorithm~\ref{alg:symBreak}, and \texttt{greedy ct
  partial} is the partial symmetry break consisting of the (longest)
prefix of patterns obtained which are all consecutive transpositions.
The next two symmetry breaks are obtained using the CEGAR algorithm
described in~\cite{DBLP:conf/flops/CodishEGIS18}: \texttt{cegar
  complete} is the complete symmetry break obtained, and \texttt{cegar
  partial} is the partial symmetry break consisting of the first 50\%
of the patterns obtained from the CEGAR algorithm.
We will come back to the fifth symmetry break, \texttt{involutions
  partial} later.

The column labeled \texttt{total} details the total number of
patterns in the symmetry break. The next 5 columns detail the number
of patterns in the symmetry break derived from a permutation of a
given type:
\texttt{(ct)} consecutive transpositions,
\texttt{(t)} all transpositions,
\texttt{(ci+t)} consecutive involutions and transpositions,
\texttt{(di)} disjoint involutions, and
\texttt{(i)} all involutions.
All of these types of permutations are specific forms of
involutions. The first two: \texttt{(ct)} and \texttt{(t)} are 
those widely applied in the symmetry breaks defined in~\cite{DBLP:conf/ijcai/CodishMPS13,Codish2019}.

The penultimate column, labeled $\rho$, shows the redundancy
ratio. This is the ratio between the number of graphs that are not
covered by the selected patterns (symmetries not broken) and the
number of isomorphism classes (all symmetries broken). If the set of
patterns is a complete symmetry break, then $\rho=1$. The smaller this
number is, the better the symmetry break. The last column, labeled
\texttt{\%\;ncc}, details the percentage of non-canonical graphs
covered by the symmetry break. If the set of patterns is a complete
symmetry break, then this number is 100.

  \begin{table}[t]\begin{center}
    \caption{Profiling partial and complete symmetry breaks obtained from greedy algorithm}\label{table:profiling}
      \begin{tabular}{|c|l|r|r|r|r|r|r|r|r|}
        \hline
            n     & symmetry break & total & ct & t & ci+t & di & i  & $\rho$~~~ & \%\,ncc~\\
      \hline
      \multirow{4}{*}{7}
      & greedy complete  & 116   & 28 & 30 & 59 & 60 &  80 &  1.00 & 100.00\\
      & greedy partial   &  58   & 28 & 28 & 48 & 49 &  51 &  1.27 & 99.99\\
      & greedy ct partial&  13   & 13 &  0 &  0 &  0 &   0 & 24.38 & 98.78\\
      \cline{2-10}  
      & cegar complete & 136 & 15 & 16 & 44 & 48 & 62 & 1.00 &100.00\\
      & cegar partial    &  68 & 15 & 16 & 25 & 27 & 30 & 1035.25&51.56\\
      \cline{2-10}  
      & involutions partial&113 &28&30&60&70&113& 1.01 & 99.99\\
        \hline
      \multirow{4}{*}{8}  
       & greedy complete & 439   & 40 & 46 & 120 & 124 & 235 & 1.00 & 100.00\\
       & greedy partial & 219   & 40 & 44 & 115 & 116 &   153 & 1.07 & 99.99\\
       & greedy ct partial& 18  & 18 &  0 &  0 &  0 &   0 &52.63  & 99.76\\
       \cline{2-10}  
       & cegar complete & 492   & 19 & 24 & 104 & 114 & 194 & 1.00 & 100.00\\
       & cegar partial & 246   & 19 & 24 & 63 & 71 &   95 & 9710.21& 44.66\\
       \cline{2-10}  
       & involutions partial& 396 & 40&47&127&158&396& 1.02 & 99.99\\
        \hline
    \end{tabular}
  \end{center}
\end{table}

\medskip\noindent\textbf{What we learn from
  Table~\ref{table:profiling}:} 
\begin{quote}
  The first $4-10$\% of the patterns selected are all derived from
  consecutive transpositions. These patterns alone already cover $>98$\%
  of the non-canonical graphs.
  Moreover, as stated in Proposition~\ref{strong-four}, the first four
  patterns always cover a total of $3/4$ of the total number of graphs
  which is also about 75\% of the total number of non-canonical
  graphs.
  
\end{quote}
\begin{quote}
  About half of the patterns selected in \texttt{greedy complete} are
  involutions; These are highly ranked. More than 60\% are in the top
  half (by ranking) of the patterns.
  In contrast, for $n=7$ less than 5\% of all ($n$-factorial)
  permutations are involutions, and for $n=8$ less than 2\%.
  The number of involutions is given as sequence \texttt{A000085} of
  the Online Encyclopedia of Integer Sequences.
\end{quote}
\begin{quote}
  Greedy selection pays off. Comparing the redundancy ratios for
  \texttt{cegar partial} and for \texttt{greedy partial} shows this.
  Basically, the CEGAR-based algorithm selects an arbitrary pattern
  which covers some graph that is not yet covered. In contrast to the
  greedy algorithm where the best such pattern is chosen.
\end{quote}
\begin{quote}
  There are a small number of graph patterns that cover a large
  portion of the search space. All of these patterns derive from
  consecutive transpositions. This is apparent from the last column in
  the table and the rows corresponding to \texttt{greedy ct partial}:
  for n=7, 13 graph patterns cover $>98\%$ of the non-canonical
  graphs, and for $n=8$,  18 graph patterns cover $>99\%$.
  We propose to consider three numbers when evaluating the quality of
  a symmetry break: (1) the redundancy ratio, (2) the percentage of
  non-canonical graphs covered, and (3) the size of the symmetry
  break. It is easy to obtain perfect values for the first two
  numbers when the third is large.

\end{quote}

Given the apparent importance of various types of involutions, we
consider a fifth symmetry break in Table~\ref{table:profiling}:
\texttt{involutions partial}. Here, we take the set of all dominating
patterns derived from involutions. These are reduced to remove
redundant patterns (a pattern in a set is redundant if all of the
graphs that it covers are covered by other patterns in the set).  For
example, when $n=7$, there are 1369 dominating patterns for
permutations which are involutions (see Table~\ref{table:dom}) and
these can be reduced to 113 patterns after removing redundancies.
As indicated in Table~\ref{table:profiling}, adopting involutions
gives a close-to-perfect symmetry break for small values of~$n$.

The greedy approach does not scale.  Algorithm~\ref{alg:symBreak}
relies on the expensive application of a model counter to rank the
candidate patterns. Also, the computation of all dominating patterns
is too time-consuming.
This motivates the approach taken in the next section.

\section{Tweaking CEGAR for Better Partial and Complete  Symmetry Breaks}\label{sec:CegarTweak}

In this section we take the lessons learned from
Table~\ref{table:profiling} and apply them to guide a CEGAR-based
algorithm to make better selections. We layer the selection of
counter-examples with layers corresponding to the five types of
patterns considered in the profiling of Table~\ref{table:profiling}.
This means that in each layer, each iteration of the CEGAR loop
searches for a graph $G$ and a permutation $\pi$ s.t.\ $\pi(G)<G$
and $\pi$ is the permutation used in that layer.
This yields a graph pattern generated from $\pi$ and covering graph $G$.\footnote{
  Graph patterns generated by a single permutation are
  disjoint, hence only one can cover the graph~$G$.
}

In the first layer \texttt{(ct)}, we seek counter-examples that are
consecutive transpositions. When no further counter-examples of this
type remain, we proceed to layer \texttt{(t)} to collect
counter-examples which are transpositions, and so on for the layers
\texttt{(ci+t)} introducing consecutive involutions, \texttt{(di)} for
disjoint involutions, and \texttt{(i)} for all other involutions. In a
final, sixth layer, we seek arbitrary counter-examples.
To ensure that counter-examples are found of the required type,
we encode in the SAT solver an additional condition in a straightforward fashion.
For example, involutions are encoded as the implications
$\pi(i) = j\Rightarrow\pi(j) =i$, for $i\neq j$.

The layered CEGAR-based algorithm can be applied in two different
capacities: First, to provide partial symmetry breaks as obtained at
the end of each layer of the run. Second, to guide the search for a
complete symmetry break. Experimentation indicates that making a better
selection of counter-examples early on reduces the number of
iterations applied in the course of the algorithm and reduces the
number of redundant patterns that need to be removed in the second
phase of the algorithm.

\noindent
\begin{table}[t]
  \caption{Comparing partial symmetry breaks obtained by layered CEGAR}\label{tab:layered}
  \resizebox{\textwidth}{!}{%
\begin{tabular}{|p{1.3em}|r|r|r|r|r|r|r|r|r|r|r|}
\hline
ord
& \multicolumn{2}{c|}{ct}
& \multicolumn{2}{c|}{t}
& \multicolumn{2}{c|}{ci+t}
& \multicolumn{2}{c|}{di}
& \multicolumn{2}{c|}{i}
& \multicolumn{1}{c|}{compl}
\\ \hline
&size&ratio&size&ratio&size&ratio&size&ratio&size&ratio&size\\
\hline
4  & 6  & 1.00 & 6  & 1.00 & 6 & 1.00 &  6 & 1.00 &  6 & 1.00 &  6  \\
5  & 12 & 1.35 & 13 & 1.26 & 13 & 1.06 &  13 & 1.06 &  14 & 1.00 &  14  \\
6  & 20 & 2.08 & 24 & 1.77 & 29 & 1.16 &  30 & 1.12 &  36 & 1.00 &  36  \\
7  & 30 & 3.87 & 40 & 3.02 & 64 & 1.46 &  70 & 1.31 &  111 & 1.01 &  115  \\
8  & 42 & 7.27 & 62 & 5.39 & 137 & 1.95 &  167 & 1.53 &  397 & 1.02 &  444  \\
9  & 56 & 13.05& 91 & 9.42 & 269 & 2.76 &  401 & 1.83 &  2,024 & 1.03 &  2,760  \\
10 & 72 & 21.53& 128& 15.34& 526 & 3.97 &  1,001 & 2.20 &  12,644 & 1.04 &  24,993  \\
11 & 90 & 33.23& 174& 23.52&1,008& 5.71 &  2,523 & 2.65 &  81,522 & $\approx$1.14 &  289,863  \\
12 &110 & 48.97 &  230 & 34.53 &  1,896 & $\approx$7.39 &  6,275 & $\approx$3.12  & n/a &  n/a & n/a \\
  \hline
\end{tabular}
}
\end{table}

Table~\ref{tab:layered} addresses the first capacity. Each pair of
columns details the partial symmetry break obtained after the
corresponding layer in the revised CEGAR algorithm. For each layer, we
present the size (number of patterns) of the corresponding partial
symmetry break and the corresponding redundancy ratio. For the last
layer (in the last column), the symmetry break is complete and
the redundancy ratio is always~$1.00$.
The first two pairs of columns
\texttt{ct} and \texttt{t} correspond to the symmetry breaks
described in~\cite{DBLP:conf/ijcai/CodishMPS13}. 
To compute the redundancy ratios, the knowledge-compilation-based tool
\dmc~\cite{d4} was applied for model counting.\footnote{
  Other model counters were considered but \dmc gave better performance.
}
In cases where the model counter was not able to compute the number of
graphs we applied the approximate model
counter \texttt{approx}~\cite{approxmodelcounter}. These cases are marked by the
symbol $\approx$.

Now we look at the effect of layering when CEGAR is used to calculate 
a complete symmetry break.
Figure~\ref{fig:cegar} summarizes the comparison of the layered
approach to CEGAR and the standard one.
Figure~\ref{fig:cegar} (on the left), details numbers of iterations
(and patterns).  The two upper curves detail the number of iterations
required to calculate a complete cover of all graphs of order
$n\in\{8..11\}$.
Each iteration of the CEGAR loop adds one pattern to
the cover. However, some patterns might become redundant due to the
addition of a stronger pattern later on. The lower two curves detail
the number of patterns in the irredundant cover obtained by reducing
the output of the CEGAR algorithm (irredundant cover is calculated by
standard means~\cite{monotone}). Note that the size of the irredundant
cover must always be smaller or equal to the number of iterations of
the CEGAR iterations.

The number of iterations is always lower for the layered approach than
for the standard one. This typically also reduces the overall time.
For $n=11$, CEGAR needed 400,016 iterations in the layered approach,
contrasting with 557,279 iterations in the non-layered (standard)
approach. Interestingly, both were reduced to  covers of similar
sizes, 289,863 in the layered approach and 291,888 in the non-layered
one. This indicates that the layered approach guides CEGAR \emph{more
  precisely}.

Figure~\ref{fig:cegar} (on the right) complements this information
with the total CPU time needed to calculate the cover, and it can be
observed that lowering the iterations of the loop also reduces the
total time.

\begin{figure}[tb]
    \begin{subfigure}{.49\textwidth}
\begin{tikzpicture}
\begin{semilogyaxis}[
    width=.9\textwidth,
    height=5.25cm,
    xlabel={order},
    ylabel={iterations/patterns},
    title={CEGAR number of iterations},
    grid=major,
    grid style=densely dotted,
    legend pos=north west,
    legend style={draw=gray!50,draw opacity=0.9,text opacity=1,fill opacity=0.5,font=\tiny\sffamily},
    log basis y=10,
    font=\sffamily,
]
\addplot[
    color=blue,
    mark=o,
     mark options={solid},
    mark size=1pt,
    line width=1pt,
    solid
] coordinates {
    (8, 678)
    (9, 4189)
    (10, 36754)
    (11, 400016)
};
\addplot[
    color=red,
    mark=square*,
     mark options={solid},
    mark size=1pt,
    line width=1pt,
    dashed
] coordinates {
    (8, 3586)
    (9, 12652)
    (10, 82789)
    (11, 557279)
};
\addplot[
    color=green!60!black,
    mark=diamond,
     mark options={solid},
    mark size=1pt,
    line width=1pt,
    dotted
] coordinates {
    (8, 444)
    (9, 2760)
    (10, 24993)
    (11, 289863)
};
\addplot[
    color=orange,
    mark=pentagon,
     mark options={solid},
    mark size=1pt,
    line width=1pt,
    dashdotted
] coordinates {
    (8, 487)
    (9, 2864)
    (10, 25461)
    (11, 291888)
};
\legend{layered, not layered, \#pats after red. - layered, \#pats after red. - not layered}
\end{semilogyaxis}
\end{tikzpicture}
    \end{subfigure}
    \begin{subfigure}{.49\textwidth}
\begin{tikzpicture}
\begin{semilogyaxis}[
    width=.9\textwidth,
    height=5.25cm,
    xlabel={order},
    ylabel={seconds},
    title={CEGAR time comparison},
    grid=major,
    grid style=densely dotted,
    legend pos=north west,
    legend style={draw=gray!50,draw opacity=0.9,text opacity=1,fill opacity=0.5,font=\tiny\sffamily},
    log basis y=10,
    font=\sffamily,
]
\addplot[
    color=blue,
    mark=o,
     mark options={solid},
    mark size=1pt,
    line width=1pt,
    solid
] coordinates {
    (8, 3.462)
    (9, 31.273)
    (10, 649.771)
    (11, 46925.086)
};
\addplot[
    color=red,
    mark=square*,
     mark options={solid},
    mark size=1pt,
    line width=1pt,
    dashed
] coordinates {
    (8, 5.737)
    (9, 54.387)
    (10, 1201.755)
    (11, 83070.297)
};
\legend{layered, not layered}
\end{semilogyaxis}
\end{tikzpicture}
    \end{subfigure}
    \caption{Comparison of CEGAR and Layered CEGAR (log scale)}\label{fig:cegar}
\end{figure}
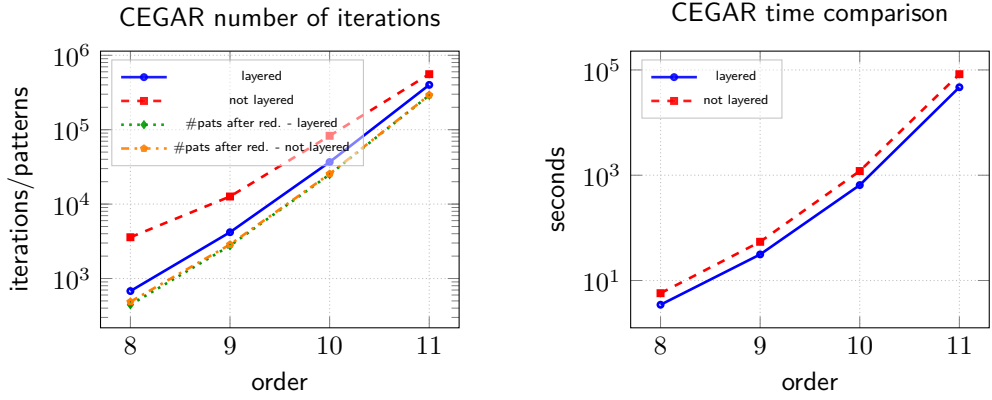

\begin{figure}[h!]
  \begin{center}
\begin{tikzpicture}
\begin{semilogyaxis}[
    width=.9\textwidth,
    height=5.25cm,
    xlabel={phase},
    ylabel={seconds},
    xtick={0,1,2,3,4,5},
    xticklabels={ct,t,ci+t,di,i,all},
    title={layered CEGAR time comparison},
    grid=major,
    grid style=densely dotted,
    legend pos=north west,
    legend style={draw=gray!50,draw opacity=0.9,text opacity=1,fill opacity=0.5,font=\tiny\sffamily},
    log basis y=10,
    font=\sffamily,
]
\addplot[
    color=blue,
    mark=o,
     mark options={solid},
    mark size=1pt,
    line width=1pt,
    solid
] coordinates {
    (0, 0.01)
    (1, 0.03)
    (2, 0.05)
    (3, 0.12)
    (4, 0.78)
    (5, 2.26)
};
\addplot[
    color=red,
    mark=square*,
     mark options={solid},
    mark size=1pt,
    line width=1pt,
    dashed
] coordinates {
    (0, 0.02)
    (1, 0.06)
    (2, 0.13)
    (3, 0.61)
    (4, 6.5)
    (5, 20.47)
};
\addplot[
    color=green!60!black,
    mark=diamond,
     mark options={solid},
    mark size=1pt,
    line width=1pt,
    dotted
] coordinates {
    (0, 0.03)
    (1, 0.12)
    (2, 0.45)
    (3, 2.06)
    (4, 89.04)
    (5, 440.82)
};
\addplot[
    color=orange,
    mark=pentagon,
     mark options={solid},
    mark size=1pt,
    line width=1pt,
    dashdotted
] coordinates {
    (0, 0.05)
    (1, 0.23)
    (2, 1.06)
    (3, 8.38)
    (4, 1781.32)
    (5, 31940.32)
};
\legend{n=8, n=9, n=10, n=11}
\end{semilogyaxis}
\end{tikzpicture}
    \caption{Computation time for partial symmetry breaks with layered
      CEGAR (log scale)}\label{fig:partialcegar}
  \end{center}
\end{figure}
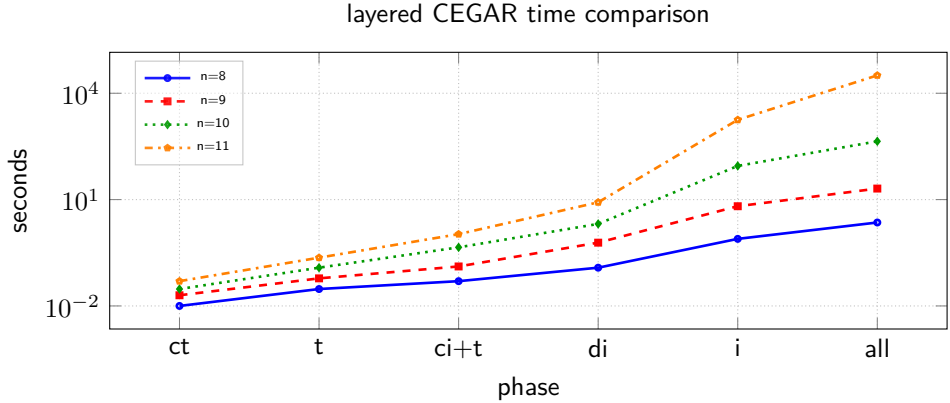

Figure~\ref{fig:partialcegar} details the computation times for
partial symmetry breaks obtained in the different phases of the layered
CEGAR algorithm. For $n=11$, computing the partial symmetry breaks
defined in terms of disjoint involutions and in terms of all
involutions is respectively 3 and 1 orders of magnitude faster than
computing the complete symmetry break. Given the fact that these
symmetry breaks are relatively small and have good redundancy ratios
indicates that they provide a good choice when balancing 
time (to compute them) and size with precision.

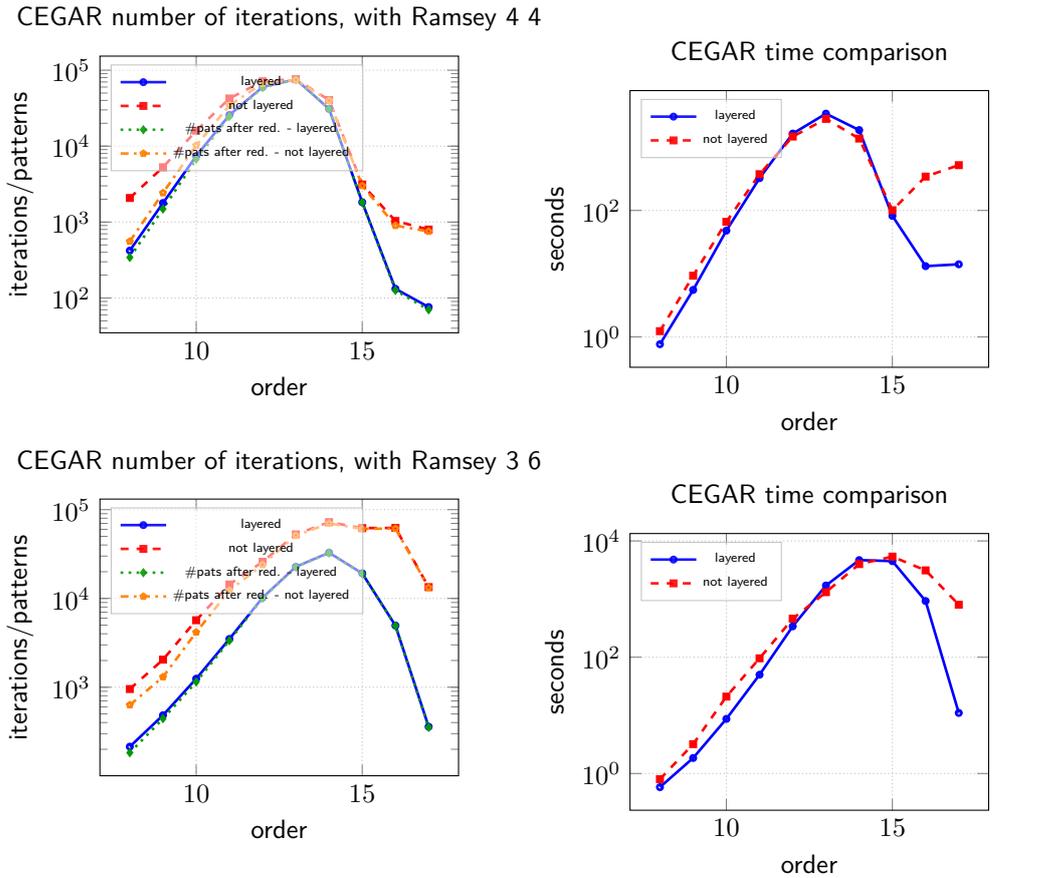
\begin{figure}[th]
    \begin{subfigure}{.49\textwidth}
\begin{tikzpicture}
\begin{semilogyaxis}[
    width=.9\textwidth,
    height=5.25cm,
    xlabel={order},
    ylabel={iterations/patterns},
    title={CEGAR number of iterations, Ramsey 4 4},
    grid=major,
    grid style=densely dotted,
    legend pos=south west,
    legend style={draw=gray!50,draw opacity=0.9,text opacity=1,fill opacity=0.5,font=\tiny\sffamily},
    log basis y=10,
    font=\sffamily,
]
\addplot[
    color=blue,
    mark=o,
     mark options={solid},
    mark size=1pt,
    line width=1pt,
    solid
] coordinates {
    (8, 421)
    (9, 1783)
    (10, 7398)
    (11, 25698)
    (12, 60043)
    (13, 75968)
    (14, 30935)
    (15, 1824)
    (16, 132)
    (17, 76)
};
\addplot[
    color=red,
    mark=square*,
     mark options={solid},
    mark size=1pt,
    line width=1pt,
    dashed
] coordinates {
    (8, 2076)
    (9, 5277)
    (10, 15960)
    (11, 42444)
    (12, 71586)
    (13, 76167)
    (14, 40533)
    (15, 3127)
    (16, 1034)
    (17, 797)
};
\addplot[
    color=green!60!black,
    mark=diamond,
     mark options={solid},
    mark size=1pt,
    line width=1pt,
    dotted
] coordinates {
    (8, 342)
    (9, 1488)
    (10, 6728)
    (11, 24589)
    (12, 59058)
    (13, 75478)
    (14, 30862)
    (15, 1815)
    (16, 126)
    (17, 70)
};
\addplot[
    color=orange,
    mark=pentagon,
     mark options={solid},
    mark size=1pt,
    line width=1pt,
    dashdotted
] coordinates {
    (8, 557)
    (9, 2424)
    (10, 10210)
    (11, 34388)
    (12, 67220)
    (13, 74543)
    (14, 39870)
    (15, 3013)
    (16, 899)
    (17, 752)
};
\legend{layered, not layered, \#pats after red. - layered, \#pats after red. - not layered}
\end{semilogyaxis}
\end{tikzpicture}
    \end{subfigure}
    \begin{subfigure}{.49\textwidth}
\begin{tikzpicture}
\begin{semilogyaxis}[
    width=.9\textwidth,
    height=5.25cm,
    xlabel={order},
    ylabel={seconds},
    title={CEGAR time comparison},
    grid=major,
    grid style=densely dotted,
    legend pos=north west,
    legend style={draw=gray!50,draw opacity=0.9,text opacity=1,fill opacity=0.5,font=\tiny\sffamily},
    log basis y=10,
    font=\sffamily,
]
\addplot[
    color=blue,
    mark=o,
     mark options={solid},
    mark size=1pt,
    line width=1pt,
    solid
] coordinates {
    (8, 0.768)
    (9, 5.54)
    (10, 47.972)
    (11, 324.126)
    (12, 1640.406)
    (13, 3390.329)
    (14, 1865.499)
    (15, 82.032)
    (16, 13.167)
    (17, 14.086)
};
\addplot[
    color=red,
    mark=square*,
     mark options={solid},
    mark size=1pt,
    line width=1pt,
    dashed
] coordinates {
    (8, 1.235)
    (9, 9.297)
    (10, 66.194)
    (11, 374.722)
    (12, 1472.504)
    (13, 2803.278)
    (14, 1367.977)
    (15, 100.828)
    (16, 343.0)
    (17, 520.364)
};
\legend{layered, not layered}
\end{semilogyaxis}
\end{tikzpicture}
 \\ 
    \end{subfigure}
    \begin{subfigure}{.49\textwidth}
\begin{tikzpicture}
\begin{semilogyaxis}[
    width=.9\textwidth,
    height=5.25cm,
    xlabel={order},
    ylabel={iterations/patterns},
    title={CEGAR number of iterations, Ramsey 3 6},
    grid=major,
    grid style=densely dotted,
    legend pos=south west,
    legend style={draw=gray!50,draw opacity=0.9,text opacity=1,fill opacity=0.5,font=\tiny\sffamily},
    log basis y=10,
    font=\sffamily,
]
\addplot[
    color=blue,
    mark=o,
     mark options={solid},
    mark size=1pt,
    line width=1pt,
    solid
] coordinates {
    (8, 214)
    (9, 484)
    (10, 1246)
    (11, 3495)
    (12, 10219)
    (13, 22730)
    (14, 32750)
    (15, 19164)
    (16, 4947)
    (17, 359)
};
\addplot[
    color=red,
    mark=square*,
     mark options={solid},
    mark size=1pt,
    line width=1pt,
    dashed
] coordinates {
    (8, 954)
    (9, 2045)
    (10, 5680)
    (11, 14353)
    (12, 25858)
    (13, 52660)
    (14, 72315)
    (15, 61965)
    (16, 62277)
    (17, 13413)
};
\addplot[
    color=green!60!black,
    mark=diamond,
     mark options={solid},
    mark size=1pt,
    line width=1pt,
    dotted
] coordinates {
    (8, 183)
    (9, 443)
    (10, 1147)
    (11, 3351)
    (12, 10053)
    (13, 22482)
    (14, 32498)
    (15, 19056)
    (16, 4932)
    (17, 353)
};
\addplot[
    color=orange,
    mark=pentagon,
     mark options={solid},
    mark size=1pt,
    line width=1pt,
    dashdotted
] coordinates {
    (8, 631)
    (9, 1305)
    (10, 4157)
    (11, 12296)
    (12, 23959)
    (13, 51074)
    (14, 69982)
    (15, 60762)
    (16, 61221)
    (17, 13234)
};
\legend{layered, not layered, \#pats after red. - layered, \#pats after red. - not layered}
\end{semilogyaxis}
\end{tikzpicture}
    \end{subfigure}
    \begin{subfigure}{.49\textwidth}
\begin{tikzpicture}
\begin{semilogyaxis}[
    width=.9\textwidth,
    height=5.25cm,
    xlabel={order},
    ylabel={seconds},
    title={CEGAR time comparison},
    grid=major,
    grid style=densely dotted,
    legend pos=north west,
    legend style={draw=gray!50,draw opacity=0.9,text opacity=1,fill opacity=0.5,font=\tiny\sffamily},
    log basis y=10,
    font=\sffamily,
]
\addplot[
    color=blue,
    mark=o,
     mark options={solid},
    mark size=1pt,
    line width=1pt,
    solid
] coordinates {
    (8, 0.582)
    (9, 1.849)
    (10, 8.696)
    (11, 50.016)
    (12, 340.424)
    (13, 1712.526)
    (14, 4682.617)
    (15, 4500.523)
    (16, 926.047)
    (17, 11.014)
};
\addplot[
    color=red,
    mark=square*,
     mark options={solid},
    mark size=1pt,
    line width=1pt,
    dashed
] coordinates {
    (8, 0.8)
    (9, 3.2)
    (10, 21.029)
    (11, 95.745)
    (12, 459.917)
    (13, 1314.626)
    (14, 3997.52)
    (15, 5379.873)
    (16, 3117.004)
    (17, 800.658)
};
\legend{layered, not layered}
\end{semilogyaxis}
\end{tikzpicture}
 \\ 
    \end{subfigure}
    \caption{Comparison of CEGAR's Performance on Ramsey graphs
      (log scale)}\label{fig:cegarRamsey}
\end{figure}

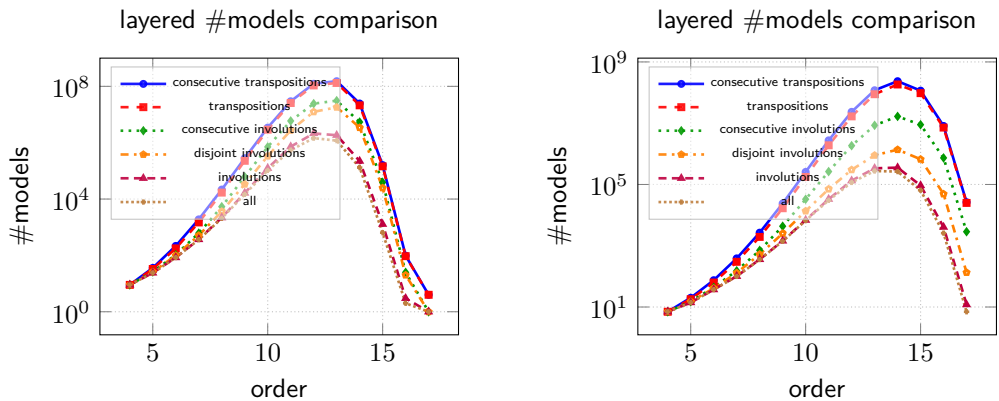
\begin{figure}[th]
    \begin{subfigure}{.495\textwidth}
\begin{tikzpicture}
\begin{semilogyaxis}[
    width=.9\textwidth,
    height=5.25cm,
    xlabel={order},
    ylabel={\#models},
    title={\# models comparison},
    grid=major,
    grid style=densely dotted,
    legend pos=north west,
    legend style={draw=gray!50,draw opacity=0.9,text opacity=1,fill opacity=0.5,font=\tiny\sffamily},
    log basis y=10,
    font=\sffamily,
]
\addplot[
    color=blue,
    mark=o,
     mark options={solid},
    mark size=1pt,
    line width=1pt,
    solid
] coordinates {
    (4, 9)
    (5, 36)
    (6, 217)
    (7, 1901)
    (8, 21703)
    (9, 281743)
    (10, 3424156)
    (11, 29532228)
    (12, 123358542)
    (13, 152319195)
    (14, 24587098)
    (15, 160177)
    (16, 98)
    (17, 4)
};
\addplot[
    color=red,
    mark=square*,
     mark options={solid},
    mark size=1pt,
    line width=1pt,
    dashed
] coordinates {
    (4, 9)
    (5, 33)
    (6, 178)
    (7, 1478)
    (8, 16919)
    (9, 227648)
    (10, 2891024)
    (11, 25616963)
    (12, 107509048)
    (13, 131638650)
    (14, 21181746)
    (15, 144663)
    (16, 94)
    (17, 4)
};
\addplot[
    color=green!60!black,
    mark=diamond,
     mark options={solid},
    mark size=1pt,
    line width=1pt,
    dotted
] coordinates {
    (4, 9)
    (5, 26)
    (6, 105)
    (7, 616)
    (8, 5356)
    (9, 61465)
    (10, 709233)
    (11, 5915712)
    (12, 24330366)
    (13, 30959768)
    (14, 5405563)
    (15, 39159)
    (16, 25)
    (17, 1)
};
\addplot[
    color=orange,
    mark=pentagon,
     mark options={solid},
    mark size=1pt,
    line width=1pt,
    dashdotted
] coordinates {
    (4, 9)
    (5, 26)
    (6, 99)
    (7, 512)
    (8, 3640)
    (9, 33646)
    (10, 333734)
    (11, 2740324)
    (12, 12433939)
    (13, 18073418)
    (14, 3430601)
    (15, 24885)
    (16, 20)
    (17, 1)
};
\addplot[
    color=purple,
    mark=triangle*,
     mark options={solid},
    mark size=1pt,
    line width=1pt,
    densely dashed
] coordinates {
    (4, 9)
    (5, 24)
    (6, 84)
    (7, 368)
    (8, 2199)
    (9, 16395)
    (10, 124065)
    (11, 724228)
    (12, 2132211)
    (13, 1841774)
    (14, 222468)
    (15, 1281)
    (16, 3)
    (17, 1)
};
\addplot[
    color=brown,
    mark=+,
     mark options={solid},
    mark size=1pt,
    line width=1pt,
    densely dotted
] coordinates {
    (4, 9)
    (5, 24)
    (6, 84)
    (7, 362)
    (8, 2079)
    (9, 14701)
    (10, 103706)
    (11, 546356)
    (12, 1449166)
    (13, 1184231)
    (14, 130816)
    (15, 640)
    (16, 2)
    (17, 1)
};
\legend{consecutive transpositions, transpositions, consecutive involutions, disjoint involutions, involutions, all}
\end{semilogyaxis}
\end{tikzpicture}
    \end{subfigure}
    \begin{subfigure}{.49\textwidth}
\begin{tikzpicture}
\begin{semilogyaxis}[
    width=.9\textwidth,
    height=5.25cm,
    xlabel={order},
    ylabel={\#models},
    title={\# models comparison},
    grid=major,
    grid style=densely dotted,
    legend pos=north west,
    legend style={draw=gray!50,draw opacity=0.9,text opacity=1,fill opacity=0.5,font=\tiny\sffamily},
    log basis y=10,
    font=\sffamily,
]
\addplot[
    color=blue,
    mark=o,
     mark options={solid},
    mark size=1pt,
    line width=1pt,
    solid
] coordinates {
    (4, 7)
    (5, 20)
    (6, 75)
    (7, 387)
    (8, 2673)
    (9, 23994)
    (10, 257088)
    (11, 2753125)
    (12, 23470236)
    (13, 119008325)
    (14, 235230065)
    (15, 115591639)
    (16, 8131954)
    (17, 26322)
};
\addplot[
    color=red,
    mark=square*,
     mark options={solid},
    mark size=1pt,
    line width=1pt,
    dashed
] coordinates {
    (4, 7)
    (5, 18)
    (6, 63)
    (7, 299)
    (8, 1943)
    (9, 16755)
    (10, 176047)
    (11, 1899806)
    (12, 16696459)
    (13, 88297504)
    (14, 183469010)
    (15, 95743806)
    (16, 7189964)
    (17, 25005)
};
\addplot[
    color=green!60!black,
    mark=diamond,
     mark options={solid},
    mark size=1pt,
    line width=1pt,
    dotted
] coordinates {
    (4, 7)
    (5, 15)
    (6, 44)
    (7, 152)
    (8, 707)
    (9, 4333)
    (10, 32971)
    (11, 263699)
    (12, 1846565)
    (13, 8520806)
    (14, 16662407)
    (15, 8924445)
    (16, 735984)
    (17, 2868)
};
\addplot[
    color=orange,
    mark=pentagon,
     mark options={solid},
    mark size=1pt,
    line width=1pt,
    dashdotted
] coordinates {
    (4, 7)
    (5, 15)
    (6, 41)
    (7, 130)
    (8, 524)
    (9, 2513)
    (10, 13587)
    (11, 70699)
    (12, 301804)
    (13, 893855)
    (14, 1371025)
    (15, 665199)
    (16, 48879)
    (17, 134)
};
\addplot[
    color=purple,
    mark=triangle*,
     mark options={solid},
    mark size=1pt,
    line width=1pt,
    densely dashed
] coordinates {
    (4, 7)
    (5, 14)
    (6, 37)
    (7, 101)
    (8, 358)
    (9, 1440)
    (10, 6879)
    (11, 33134)
    (12, 133014)
    (13, 336488)
    (14, 354861)
    (15, 92761)
    (16, 4054)
    (17, 12)
};
\addplot[
    color=brown,
    mark=+,
     mark options={solid},
    mark size=1pt,
    line width=1pt,
    densely dotted
] coordinates {
    (4, 7)
    (5, 14)
    (6, 37)
    (7, 100)
    (8, 356)
    (9, 1407)
    (10, 6657)
    (11, 30395)
    (12, 116792)
    (13, 275086)
    (14, 263520)
    (15, 64732)
    (16, 2576)
    (17, 7)
};
\legend{consecutive transpositions, transpositions, consecutive involutions, disjoint involutions, involutions, all}
\end{semilogyaxis}
\end{tikzpicture}
    \end{subfigure}

    \caption{The impact of various partial symmetry breaks on Ramsey
      graphs (log scale)}\label{fig:impactPartial}
\end{figure}

\section{Symmetry Breaks for a Specific Graph Search
  Problem}\label{sec:experiments}

In this section, we describe the computation of partial and complete
symmetry breaks tailored for a specific graph search problem. In this
context, when seeking a counter-example in the CEGAR loop, we restrict
the search to graphs that satisfy the constraints of the given search
problem. We apply this approach for the layered approach to CEGAR and
the standard one.

We consider a classic example: the search for \emph{Ramsey
graphs}~\cite{Rad}. The graph $R(s,t;n)$ is a simple graph with~$n$
vertices that contains no clique of size~$s$, and no independent set
of size~$t$. The \emph{Ramsey number}~$R(s,t)$ is the smallest number~$n$ for
which there is no $R(s,t;n)$ graph.
We also describe the impact of using these symmetry breaks.

Figure~\ref{fig:cegarRamsey} summarizes the comparison of the
computation of symmetry breaks in this context.
We focus on symmetry breaks for $R(4,4;n)$ (two upper plots) and
$R(3,6;n)$ (two lower plots).\footnote{%
  These enable us to perform a precise analysis, however,
  unsatisfiability alone for larger Ramsey numbers is famously
  difficult~\cite{GauthierB24}.}
We study only the values of $n$ until the respective Ramsey numbers,
which are both~18. The number of Ramsey graphs peaks at some point, at
which it is also difficult to calculate the cover.
The two plots on the left detail the numbers of iterations in the
layered/non-layered approaches as well as the numbers of patterns
removed after the CEGAR loops (upper for $Ramsey(4,4;n)$ and lower for
$Ramsey(3,6;n)$). The two plots on the right detail total CPU times
(upper for $Ramsey(4,4;n)$ and lower for $Ramsey(3,6;n)$).

The number of iterations is always lower for the layered approach than
for the standard one. Also, the number of patterns removed in the
reduce phase is lower.
This typically reduces the overall time. This is, however, not always
the case as individual SAT calls are more expensive.

Here, in comparison to Figure~\ref{fig:cegar}, the number of patterns
removed after the CEGAR loop is much smaller.
Interestingly, in the case of $R(3,6;n)$, not only that the number of
iterations is significantly smaller in the layered approach, but it
also remains so after reduction. This suggests that the two runs of
CEGAR+reduction reach completely different local optima.

Figure~\ref{fig:impactPartial} depicts the number of Ramsey graphs
found when applying the various proposed symmetry breaks:
$Ramsey(4,4;n)$ on the left and $Ramsey(3,6;n)$ on the right.  The
curves in both plots from highest to lowest, correspond precisely and
in order to the 6 layers: \texttt{(ct)}, \texttt{(t)},
\texttt{(ci+t)}, \texttt{(di)}, \texttt{(i)}, and \texttt{(all)} (the
complete symmetry break).
The two upper curves describe the application of the widely applied
partial symmetry breaks defined
in~\cite{DBLP:conf/ijcai/CodishMPS13}. The lowest curve describes the
application of a complete symmetry break.
General transpositions (\texttt{t}) do not give much advantage over
just consecutive transpositions \texttt{(ct)}. However, both are by
orders of magnitude worse than the other classes of permutations.
Note the proximity of the curve for all involutions (\texttt{i}) to
the lowest curve (the complete symmetry break). This suggests that
focusing only on involutions gives a symmetry break very close to the
complete break.

\section{Conclusions and Future Work}\label{sec:conclusions}

The objective of this paper is to approach the search for graph symmetry breaks
in a systematic way.
We study the structure of graph patterns that are selected in a greedy
algorithm to break all symmetries for graphs of small orders.
We learn that graph patterns that derive from a specific class of
permutations, called involutions, play an important role.
Involutions generalize the class of transpositions, which play an
important role in breaking symmetries for graphs.
We then refine a CEGAR-based approach to compute symmetry breaks
introducing a layered approach. In this way, we can guide the
CEGAR-based search for a complete symmetry break and also provide a
series of partial symmetry breaks of increasing precision.

It is important to reflect on why we construct partial symmetry breaks
in a CEGAR-based algorithm, in contrast to simply collecting all
permutations of the restricted types. For patterns based on
transpositions, the straightforward construction is doable. However,
for patterns based on the other types of involutions, there
are too many of them and the CEGAR-based approach enables us to select
those that contribute to the corresponding symmetry breaks.

We do not expect to find a complete symmetry break of polynomial size
that identifies canonical graphs that are lex-leaders,
cf.~\cite{KatsirelosNW10}.  However, we still aim to find small,
perhaps even polynomial in size, \emph{partial} symmetry breaks that
break a majority of the symmetries on graphs.
Coming back to the greedy algorithm presented in
Section~\ref{sec:greedy}, for $n=8$, 439 graph patterns are found to
cover all of the 268,423,110 non-canonical graphs of order 8.  The
last 163 graph patterns selected in the greedy algorithm, cover a
negligible total of 382 of these 268,423,110 graphs.
In much the same way that Proposition~\ref{strong-four} points to four
specific graph patterns that cover 75\% of the non-canonical graphs,
the \emph{holy grail} for symmetry breaking for graphs is to specify a small
symmetry break that breaks a significant portion of the symmetries.

This paper opens a number of avenues for future work.
Since focusing on specific permutation types positively impacts CEGAR,
we plan to also apply the lessons learned in this paper in
the context of a dynamic symmetry breaking for graph generation. In this
approach, for example as performed in~\cite{sms}, symmetries are
detected and broken during the generation (and enumeration) of the
solutions of a given graph search problem.
Another direction of research is to identify further sub-classes of permutations
by either refining the framework proposed here or looking for completely different
classes. Such division could also be problem-specific.
A more theoretical direction of research is to find
a justification for why involutions lead to better symmetry breaks.

\end{document}